\newtheorem{theorem}{Theorem}[section]
\newtheorem{lemma}[theorem]{Lemma}
\author{
  Iffat Chowdhury and Matt Gibson\\
	Department of Computer Science\\
University of Texas at San Antonio\\
San Antonio, TX, USA\\
  \texttt{iffat.chowdhury@utsa.edu, gibson@cs.utsa.edu}
}
\title{A Characterization of Consistent Digital Line Segments in $\mathbb{Z}^2$}
\begin{document}
\bibliographystyle{plain}
  \maketitle
    \begin{abstract}
Our concern is the digitalization of line segments in $\mathbb{Z}^2$ as considered by Chun et al.~\cite{ChunKNT09} %[Discrete Comput. Geom., 2009] 
and Christ et al.~\cite{ChristPS12}.  %[Discrete Comput. Geom., 2012]. 
The key property that differentiates the research of Chun et al.~and Christ et al.~from other research in digital line segment construction is that the intersection of any two segments must be connected.  Such a system of segments is called a consistent digital line segments system (CDS). Chun et al.~give a construction for all segments in $\mathbb{Z}^d$ that share a common endpoint (called consistent digital rays (CDR)) that has asymptotically optimal Hausdorff distance, and Christ et al.~give a complete CDS in $\mathbb{Z}^2$ with optimal Hausdorff distance.  Christ et al.~also give a characterization of CDRs in $\mathbb{Z}^2$, and they leave open the question on how to characterize CDSes in $\mathbb{Z}^2$.  In this paper, we answer the most important open question regarding CDSes in $\mathbb{Z}^2$ by giving the characterization asked for by Christ et al.  We obtain the characterization by giving a set of necessary and sufficient conditions that a CDS must satisfy.
    \end{abstract}

\section{Introduction}
\label{sec:intro}
This paper explores families of digital line segments as considered by Chun et al.~\cite{ChunKNT09} and Christ et al.~\cite{ChristPS12}. Consider the unit grid $\mathbb{Z}^2$, and in particular the unit grid graph: for any two points $p=(p^x,p^y)$ and $q=(q^x,q^y)$ in $\mathbb{Z}^2$, $p$ and $q$ are neighbors if and only if $|p^x-q^x|+|p^y-q^y|=1$.  For any pair of grid vertices $p$ and $q$, we'd like to define a digital line segment $R_p(q)$ from $p$ to $q$.  The collection of digital segments must satisfy the following five properties.

\begin{description}
\item $(S1)$ \textit{Grid path property:} For all $p, q \in \mathbb{Z}^2$, $R_p(q)$ is the points of a path from $p$ to $q$ in the grid topology.
\item $(S2)$ \textit{Symmetry property:} For all $p, q \in \mathbb{Z}^2$, we have $R_p(q) = R_q(p)$.
\item $(S3)$ \textit{Subsegment property:} For all $p, q \in \mathbb{Z}^2$ and every $r,s \in R_p(q)$, we have $R_r(s) \subseteq R_p(q)$.
\end{description} 

Properties $(S2)$ and $(S3)$ are quite natural to ask for; the subsegment property $(S3)$ is motivated by the fact that the intersection of any two Euclidean line segments is connected. See Fig.~\ref{fig:property} (a) for an illustration of a violation of (S3).  Note that a simple ``rounding'' scheme of a Euclidean segment commonly used in computer vision produces a good digitalization in isolation, but unfortunately it will not satisfy (S3) when combined with other digital segments, see Fig.~\ref{fig:property} (b) and (c).   

\begin{description}
 \item $(S4)$ \textit{Prolongation property:} For all $p, q \in \mathbb{Z}^2$, there exists $r \in \mathbb{Z}^2$, such that $r \notin R_p(q)$ and $R_p(q) \subseteq R_p(r)$.
\end{description}

The prolongation property $(S4)$ is also a quite natural property to desire with respect to Euclidean line segments.  Any Euclidean line segment can be extended to an infinite line, and we would like a similar property to hold for our digital line segments.  While (S1)-(S4) form a natural set of axioms for digital segments, there are pathological examples of segments that satisfy these properties which we would like to rule out. For example, Christ et al.~\cite{ChristPS12} describe a CDS where a double spiral is centered at some point in $\mathbb{Z}^2$, traversing all points of $\mathbb{Z}^2$.  A CDS is obtained by defining $R_p(q)$ to be the subsegment of this spiral connecting $p$ and $q$.   To rule out these CDSes, the following property was added.

\begin{description}
\item $(S5)$ \textit{Monotonicity property:} For all $p,q \in \mathbb{Z}^2$, if $p^x = q^x = c_1$ for any $c_1$ (resp. $p^y = q^y = c_2$ for any $c_2$), then every point $r \in R_p(q)$ has $r^x=c_1$ (resp. $r^y=c_2$).
\end{description}

\begin{figure}[htpb]
\centering
\begin{tabular}{c@{\hspace{0.01\linewidth}}c@{\hspace{0.01\linewidth}}c}

\includegraphics[width=1.4in]{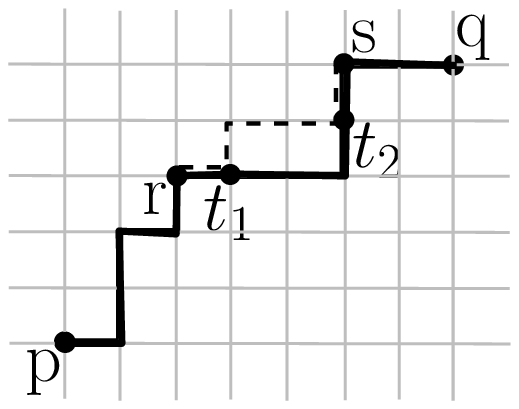} &
\includegraphics[width=1.3in]{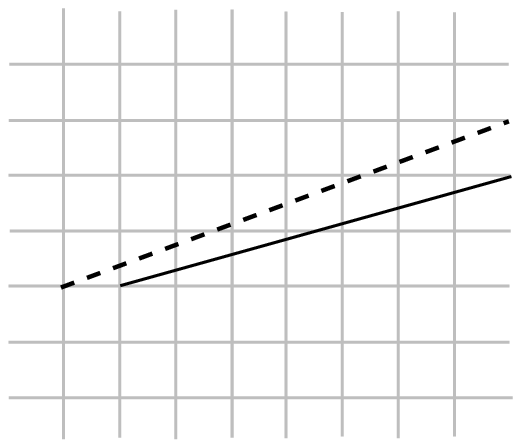} &
\includegraphics[width=1.3in]{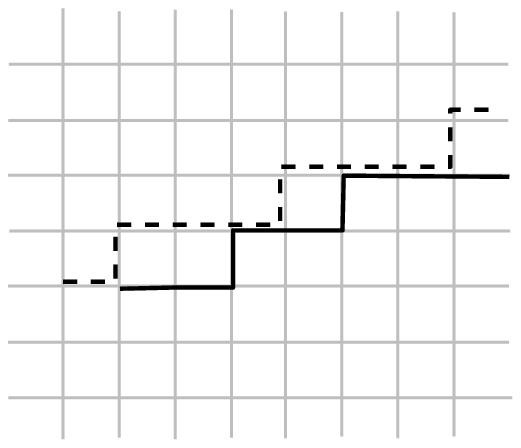}\\

(a) & (b) & (c)
\end{tabular}
\caption{(a) An illustration of the violation of (S3). The solid segment is $R_p(q)$, and the dashed segment is $R_r(s)$.  (b) The dashed line and the solid line denote two different Euclidean line segments. (c) The corresponding digital line segments via a rounding approach.}
\label{fig:property}
\end{figure}

If a system of digital line segments satisfies the axioms $(S1)-(S5)$, then it is called a \textit{consistent digital line segments system} (CDS).  Given such a system, one can easily define digital analogs of various Euclidean objects.  For example, a Euclidean object $O$ is convex if for any two points $p,q \in O$ we have that the Euclidean line segment $\overline{pq}$ does not contain any points outside of $O$.  Given a CDS, the natural definition of a digital convex object will satisfy some nice properties.  For example, one can see that a digital convex object with respect to a CDS cannot contain any holes (as a result of the prolongation property $(S4)$).  Similarly, one can easily obtain the digital analog of a star-shaped object with a CDS.  A Euclidean object $S$ is star-shaped if there is a point $u \in S$ such that for every point $v \in S$ we have the Euclidean line segment $\overline{u v}$ does not contain any points outside of $S$, and the natural digital generalization easily follows.

\textbf{Previous Works.} 
Unknown to Chun et al.~and Christ et al.~when publishing their papers, Luby \cite{Luby88} considers \textit{grid geometries} which are equivalent to systems of digital line segments satisfying (S1), (S2), (S5) described in this paper.  Let $p,q \in \mathbb{Z}^2$ be such that $p^x \leq q^x$.  We say the segment $R_p(q)$ has \textit{nonnegative slope}  if $p^y \leq q^y$ and otherwise has \textit{negative slope}.  Luby investigates a property called \textit{smoothness} which uses the following notion of distance between two digital line segments. Consider two digital segments $R_p(q)$ and $R_{p'}(q')$ with nonnegative slope and any point $r \in R_{p}(q)$.  If there is a $s\in R_{p'}(q')$ such that $r^x + r^y = s^x + s^y$, then dist$(R_p(q), R_{p'}(q'),r)$ is defined to be $r^x - s^x$.  If there is no such $s$ then we say that dist$(R_p(q), R_{p'}(q'),r)$ is undefined.  See Fig.~\ref{fig:smooth} (a).  The segments are \textit{smooth} if dist$(R_p(q), R_{p'}(q'),r)$ is either monotonically increasing or monotonically decreasing varying $r$ over its defined domain. See Fig.~\ref{fig:smooth} (b).  There is also a symmetric definition of smoothness for pairs of digital segments with negative slope.  A grid geometry is said to be smooth if every pair of nonnegative sloped segments and every pair of negative sloped segments are smooth.  Luby shows that if a grid geometry is smooth, then it satisfies properties (S3) and (S4) (and therefore is a CDS).

\begin{figure}[htpb]
\centering

\begin{tabular}{c@{\hspace{0.01\linewidth}}c}

\includegraphics[width=1.3in]{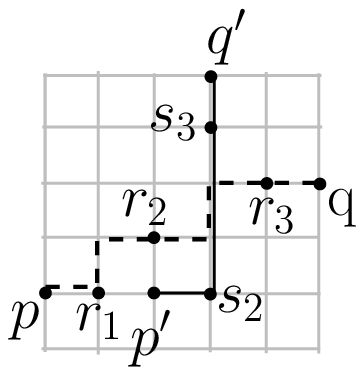} &
\includegraphics[width=1.15in]{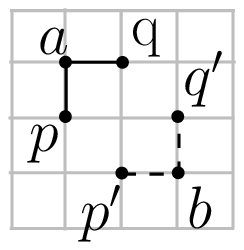}\\

(a) & (b)
\end{tabular}
\caption{(a) dist$(R_p(q), R_{p'}(q'),r_1)$ is undefined.  dist$(R_p(q), R_{p'}(q'),r_2) = -1$ (using $s_2 \in R_{p'}(q'))$, dist$(R_p(q), R_{p'}(q'),r_3) = 1$.  (b) An example of segments that are not smooth: dist$(R_p(q), R_{p'}(q'),p) = -1$, dist$(R_p(q), R_{p'}(q'),a) = -2$, and dist$(R_p(q), R_{p'}(q'),q) = -1$.}
\label{fig:smooth}
\end{figure}

Chun et al.~\cite{ChunKNT09} give an  $\Omega(\log n)$ lower bound on the Hausdorff Distance of a CDS where $n$ is the number of points in the segment, and the result even applies to \textit{consistent digital rays} or CDRs (i.e., all segments share a common endpoint).  Note that this lower bound is due to property (S3), as it is easy to see that if the requirement of (S3) is removed then digital segments with $O(1)$ Hausdorff distance are easily obtained, for example the trivial ``rounding'' scheme used in Fig.~\ref{fig:property} (c).  Chun et al.~give a construction of CDRs that satisfy the desired properties (S1)-(S5) with a tight upper bound of $O(\log n)$ on the Hausdorff distance.  Christ et al.~\cite{ChristPS12} extend the result to get an optimal $O(\log n)$ upper bound on Hausdorff distance for a CDS in $\mathbb{Z}^2$. 

After giving the optimal CDS in $\mathbb{Z}^2$, Christ et al.~\cite{ChristPS12} investigate common patterns in CDSes in an effort to obtain a characterization of CDSes. As a starting point, they are able to give a characterization of CDRs. In their effort to give a characterization, they proved a sufficient condition on the construction of the CDSes but then they give an example of a CDS that demonstrates that their sufficient condition is not necessary. They ask if there are any other interesting examples of CDSes that do not follow their sufficient condition and left open the question on how to characterize the CDSes in $\mathbb{Z}^2$.

\textbf{Our Contributions.} In this paper, we answer the most important open question regarding CDSes in $\mathbb{Z}^2$ by giving the characterization asked for by Christ et al.  Since Christ et al.~has given a characterization of CDRs, we view the construction of a CDS as the assignment of a CDR system to each point in $\mathbb{Z}^2$ such that the union of these CDRs satisfies properties (S1)-(S5).  We obtain the characterization by giving a set of necessary and sufficient conditions that the CDRs must satisfy in order to be combined into a CDS.  Then to tie together our work with the previous work in CDSes, we analyze the work of Christ et al.~and Luby in the context of our characterization. 

\textbf{Motivation and Related Works.} Digital geometry plays a fundamental and substantial role in many computer vision applications, for example image segmentation, image processing, facial recognition, fingerprint recognition, and some medical applications. One of the key challenges in digital geometry is to represent Euclidean objects in a digital space so that the digital objects have a similar visual appearance as their Euclidean counterparts.  Representation of Euclidean objects in a digital space has been a focus in research for over $25$ years, see for example \cite{Franklin85,GreeneY86,Sugihara01,LieropOW88,Eckhardt00,BertheL11,Andres03}. 

Digital line segments are particularly important to model accurately, as other digital objects depend on them for their own definitions (e.g. convex and star-shaped objects). In $1986$, Greene and Yao \cite{GreeneY86} gave an interface between the continuous domain of Euclidean line segments and the discrete domain of digital line segments. Goodrich et al.~\cite{GoodrichGHT97} focused on rounding the Euclidean geometric objects to a specific resolution for better computer representation. They gave an efficient algorithm for $\mathbb{R}^2$ and $\mathbb{R}^3$ in the ``snap rounding paradigm'' where the endpoints or the intersection points of several different line segments are the main concerns. Later, Sivignon et al.~\cite{SivignonDC04} also gave some results on the intersection of two digital line segments. In their review paper, Klette et al.~\cite{KletteR04} discussed the straightness of digital line segments. The characteristics of the subsegment of digital straight line was computed in \cite{LachaudS13}. Cohen et al.~\cite{Cohen-OrK97} gave a method of converting 3D continuous line segments to discrete line segments based on a voxelization algorithm, but they did not have the requirement that the intersection of two digital line segments should be connected. 

\section{Preliminaries}
\label{sec:prelim}
Before we describe our characterization, we first need to give some details of the Christ et al.~characterization of CDRs.  For any point $p\in \mathbb{Z}^2$, let $Q^1_p, Q^2_p, Q^3_p, Q^4_p$ denote the first, second, third, and fourth quadrants of $p$ respectively.  Christ et al. show how to construct $R_p(q)$ for $q \in Q^1_p$ from any total order of $\mathbb{Z}$, which we denote $\prec^1_p$.  We describe $R_p(q)$ by ``walking'' from $p$ to $q$. Starting from $p$, the segment will move either ``up'' or ``right'' until it reaches $q$.  Suppose on the walk we are currently at a point $r=(r^x,r^y)$. Then it needs to move to either $(r^x+1,r^y)$ or $(r^x,r^y+1)$. Either way, the sum of the two coordinates of the current point is increased by $1$ in each step. The segment will move up $q^y-p^y$ times, and it will move right $q^x-p^x$ times. If the line segment is at a point $r$ for which $r^x+r^y$ is among the $q^y-p^y$ greatest integers in the interval $I(p,q) :=[p^x+p^y, q^x+q^y-1]$ according to $\prec^1_p$, the line segment will move up. Otherwise, it will move right. See Fig.~\ref{fig:charFig1} for an example. Throughout the paper, when we say that $a < b$, we mean that $a$ is less than $b$ in natural total order and when we say that $a \prec b$, we mean that $a$ is less than $b$ according to total order $\prec$. 

\begin{figure}[htpb]
\centering
\begin{tabular}{c@{\hspace{0.01\linewidth}}c@{\hspace{0.01\linewidth}}c}
\includegraphics[width=2in]{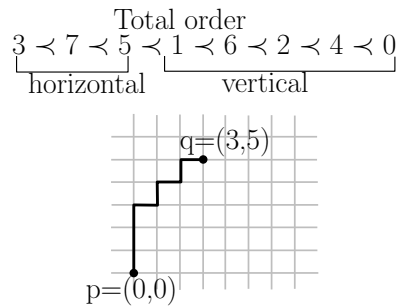} \\

\end{tabular}
\caption{The digital line segment between $p=(0,0)$ and $q=(3,5)$. According to $\prec_p$, the $q^x-p^x = 3$ smallest integers in $[0, 7]$ correspond to the horizontal movements, and the $q^y-p^y = 5$ largest integers in $[0, 7]$ correspond to the vertical movements.}
\label{fig:charFig1}
\end{figure}

Property (S3) is generally the most difficult property to deal with, and we will argue that the segments $R_p(q)$ and $R_p(q')$ will not violate (S3) for any points $q$ and $q'$ in the first quadrant of $p$.  As shown in \cite{ChristPS12}, (S3) is violated if and only if two segments intersect at a point $t_1$, one segment moves vertically from $t_1$ while the other moves horizontally from $t_1$, and the segments later intersect again.  Consider two digital segments that ``break apart'' at some point $t_1$ in this manner, and suppose they do intersect again.  Let $t_2$ be the first point at which they intersect after ``splitting apart''.  Then we say that $(t_1,t_2)$ is \textit{witness to the violation of (S3)} or a \textit{witness} for short. Therefore, one can show that any two segments satisfy $(S3)$ by showing that they do not have witnesses, and this is how we will prove the segments satisfy (S3) now (and also in our characterization). Consider the segments $R_p(q)$ and $R_p(q')$ generated according to the Christ et al.~definition, and suppose for the sake of contradiction that they have a witness $(t_1,t_2)$ as in Fig.~\ref{fig:property} (a).  One segment moves up at point $t_1$ and moves right into the point $t_2$ which implies $(t_2^x + t_2^y - 1) \prec^1_p (t_1^x + t_1^y)$, and the other segment moves right at point $t_1$ and moves up into the point $t_2$ which implies $(t_1^x + t_1^y) \prec^1_p (t_2^x + t_2^y - 1)$, a contradiction.  Therefore $R_p(q)$ and $R_p(q')$ do not have any witnesses and therefore satisfy (S3).  Christ et al.~\cite{ChristPS12} show that digital segments in quadrants $Q^2_p$, $Q^3_p$, and $Q^4_p$ can also be generated with total orders $\prec^2_p, \prec^3_p$, and $\prec^4_p$ (described formally below), and moreover they establish a one-to-one correspondence between CDRs and total orders.  That is, (1) given any total order of $\mathbb{Z}$, one can generate all digital rays in any quadrant of $p$, and (2) for any set of digital rays $R$ in some quadrant of $p$, there is a total order that will generate $R$.  This provides a characterization of CDRs.

Given the characterization of CDRs, the problem of constructing a complete CDS can be viewed as assigning total orders to all points in $\mathbb{Z}^2$ so that the segments obtained using these total orders are collectively a CDS.  Suppose that for every point $p \in \mathbb{Z}^2$, we assign to $p$ a total order $\prec^1_p$ to generate segments to all points in $Q^1_p$. Now suppose that we want to define a ``third-quadrant segment'' $R_p(q)$ to some point $q \in Q^3_p$.  Note that $q \in Q^3_p$ implies that $p \in Q^1_{q}$.  Since all first-quadrant segments have been defined, this means $R_{q}(p)$ has been defined, and the symmetry property (S2) states that $R_p(q) = R_{q}(p)$.  Therefore we do not need to use a total order to generate these third-quadrant segments; we simply use the corresponding first-quadrant segments which have already been defined. In order to be part of a CDS, $\bigcup_{q \in Q^3_p} R_q(p)$ must be a system of rays in $Q^3_p$ that satisfies (S1)-(S5).  From the characterization of rays, we know that there is an implicit third-quadrant total order $\prec^3_p$ on the integers in the range $(-\infty, p^x+p^y]$ that can be used to generate these rays.  This generation is done in a very similar manner as in first quadrant rays.  The key differences are: (1) the first quadrant segment $R_q(p)$ uses the interval $[q^x + q^y, p^x + p^y -1]$ and the third quadrant segment $R_p(q)$ uses the interval $[q^x + q^y + 1, p^x + p^y]$, and (2) the sum of the coordinates of our ``current point'' \textit{decreases} by 1 each time as we walk from $p$ to $q$.  Note that when considering first-quadrant segments, a horizontal movement (resp. vertical movement) is determined by the sum of the coordinates of the ``left'' endpoint (resp. ``bottom'' endpoint), whereas in a third quadrant segment a horizontal  movement (resp. vertical movement) is determined by the sum of the coordinates of the ``right'' endpoint (resp. ``top'' endpoint).  This implies that if the first-quadrant segment made a horizontal (resp. vertical) movement at a point where the sum of the coordinates is $a$, then the corresponding third-quadrant segment should make a horizontal (resp. vertical) movement at $(a+1)$.  For example consider Fig.~\ref{fig:charFig1}.  Since the horizontal movements of this first quadrant segment are at 3, 7, and 5, then the third quadrant segment should make horizontal movements at 4, 8, and 6.  Similarly, the third quadrant segment should make vertical movements at 2, 7, 3, 5, and 1. We again state that we do not explicitly construct third quadrant segments using this technique and instead obtain them directly from the corresponding first quadrant segments.  But note that if there is no total order $\prec^3_p$ which can be used to generate these third-quadrant segments then the segments necessarily must not satisfy at least one of (S1)-(S5). These implicit third-quadrant total orders will play an important role in the proof of our characterization.

Now consider the definition of segments $R_p(q)$ with negative slope, that is, $R_p(q)$ for which $q \in Q^2_p$ or $q \in Q^4_p$. We ``mirror'' $p$ and $q$ by multiplying both $x$-coordinates by $-1$.  Let $m(p) = (-p^x, p^y)$ and $m(q) = (-q^x, q^y)$ denote the mirrored points.  Note that if $q \in Q^2_p$, then $m(q) \in Q^1_{m(p)}$, and if $q \in Q^4_p$, then $m(q) \in Q^3_{m(p)}$.  Therefore $R_{m(p)}(m(q))$ is a segment with nonnegative slope and can be defined as described above.  We compute a second-quadrant segment $R_p(q)$ by making the same sequence of horizontal/vertical movements as $R_{m(p)}(m(q))$ when generated by a second-quadrant total order $\prec^2_p$ on the integers in the range $[-p^x+p^y, \infty)$.  Similarly to third quadrant segments, fourth-quadrant segments $R_p(q)$ are set to be the same as $R_q(p)$ and there is an implicit fourth-quadrant total order $\prec^4_p$ on the integers in the range $(-\infty, -p^x+p^y]$.

%%%%%%%%%%%%%%%%%%%%%%%%%%%%%%%%%%%%%%%%%%%%%%%%%%%%%%%%%%%%%%

\section{A Characterization of CDSes in $\boldmath{Z}^2$}
\label{sec:necessary} 
In a complete CDS, the segments that are adjacent to any point $p\in\mathbb{Z}^2$ can be viewed as a system of CDRs emanating from $p$, and therefore there is a total order than can be used to generate these segments.  Christ et al.~show that if the same total order is used by every point in $\mathbb{Z}^2$ to generate its adjacent segments, then the result is a CDS (the analysis follows very closely to the analysis for CDRs shown in the previous section).  However, there are some situations in which points can be assigned different total orders and we still get a CDS.  To illustrate this, consider Fig.~\ref{fig:charFig} (a).  Note that $\prec_p^1$ and $\prec_{p'}^1$ disagree on the relative ordering of 4 and 6, yet the resulting segments $R_p(q)$ and $R_{p'}(q')$ satisfy property (S3).  But if we instead use the total orders as shown in Fig.~\ref{fig:charFig} (b), they once again disagree on the ordering of 4 and 6 but this time $R_p(q)$ and $R_{p'}(q')$ do not satisfy property (S3).  The issue is then to identify a set of necessary and sufficient properties of the total orders in a CDS.

\begin{figure}[htpb]
\centering
\begin{tabular}{c@{\hspace{0.01\linewidth}}c@{\hspace{0.01\linewidth}}c}
\includegraphics[width=2.3in]{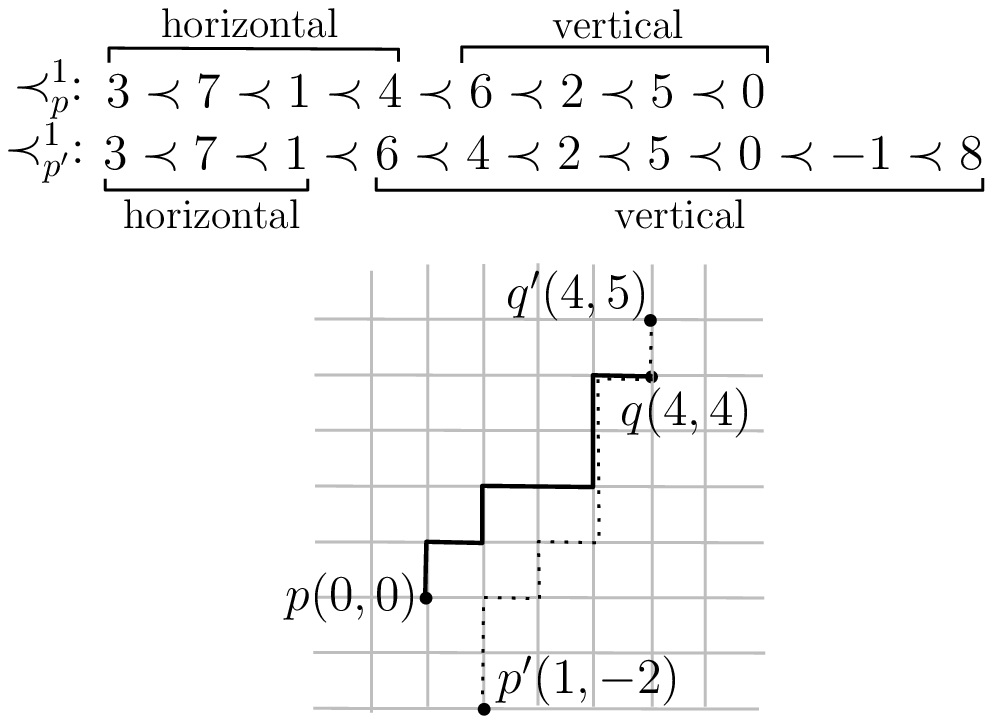} &
\includegraphics[width=2.3in]{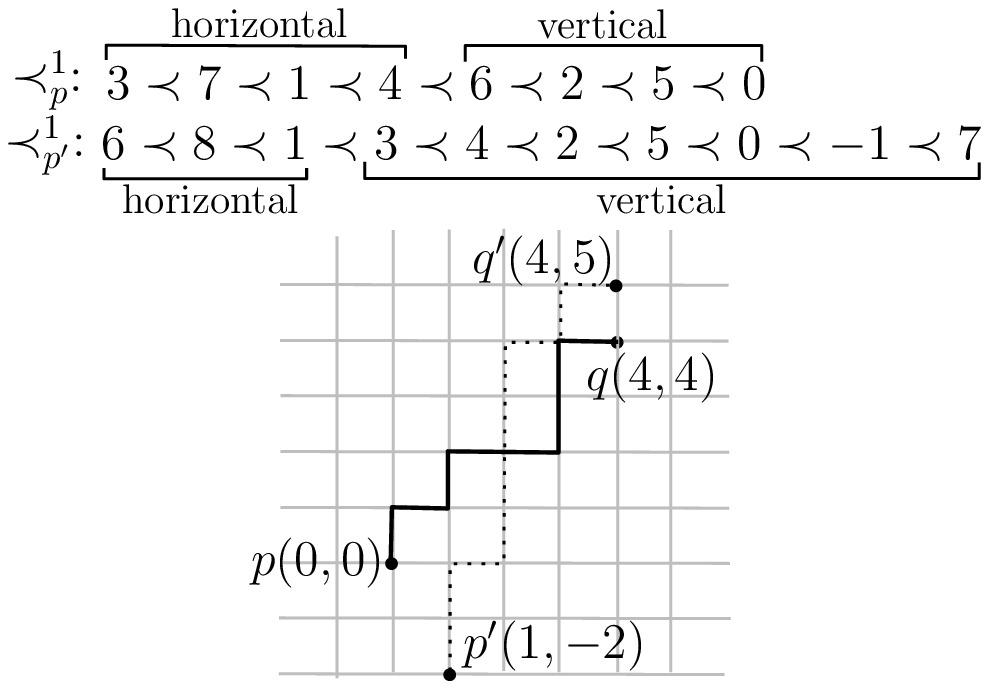}\\
(a) & (b)
\end{tabular}
\caption{(a) A choice of $\prec_p^1$ and $\prec_{p'}^1$ that satisfies (S3). (b) A choice of $\prec_p^1$ and $\prec_{p'}^1$ that does not satisfy (S3).}
\label{fig:charFig}
\end{figure}

We are now ready to give our characterization.  We assume that we are considering segments $R_p(q)$ with nonnegative slope for the majority of this section, and we give set of necessary and sufficient conditions which $\prec^1_{p}$ must satisfy for each $p$ in $\mathbb{Z}^2$. To help explain what must happen we first look at the interaction between first quadrant rays and third quadrant rays. Even though our conditions are only with respect to first-quadrant total orders, it will be useful to describe that our condition is necessary by showing that if the condition is not satisfied then there is some point $q$ such that \textit{any} definition of $\prec^3_q$ would generate third-quadrant segments that violate (S3).

Suppose, we have the total order $\prec^1_{p_1}$ for some point $p_1 \in \mathbb{Z}^2$, and let $p_2 \in Q^1_{p_1}$ and recall that we must have $R_{p_2}(p_1) = R_{p_1}(p_2)$ by property (S2). Now consider how $\prec^3_{p_2}$ must be defined so that $R_{p_2}(p_1) = R_{p_1}(p_2)$. Any integer on which $R_{p_2}(p_1)$ moves horizontally should be smaller than any integer on which the path moves vertically with respect to $\prec^3_{p_2}$, otherwise $R_{p_2}(p_1) \neq R_{p_1}(p_2)$. Motivated by this, we say that an integer on which $R_{p_2}(p_1)$ moves vertically has \textit{priority} over an integer on which it moves horizontally. So, for any two integers $a$ and $b$ such that $a$ has priority over $b$, $(a+1)$ must be larger than $(b+1)$ with respect to $\prec^3_{p_2}$.

Now, suppose we have three points $p_1, p_2, p_3$, where $p_1, p_2 \in Q^3_{p_3}$. $\prec^3_{p_3}$ has a set of priorities induced by $R_{p_1}(p_3)$ and another set of priorities induced by $R_{p_2}(p_3)$. Let $a$ and $b$ be two integers in $I(p_1,p_3) \cap I(p_2,p_3)$. If $a$ has priority over $b$ in $\prec^1_{p_1}$ and $b$ has priority over $a$ in $\prec^1_{p_2}$, then we call this a \textit{conflicting priority}. If we have a conflicting priority then any definition of $\prec^3_{p_3}$ will violate (S2). Indeed, if $(b+1) \prec^3_{p_3} (a+1)$ then this would imply $R_{p_3}(p_2) \neq R_{p_2}(p_3)$, and if $(a+1) \prec^3_{p_3} (b+1)$ then this would imply $R_{p_3}(p_1) \neq R_{p_1}(p_3)$.  Therefore it is necessary to define $\prec^1_{p_1}$ and $\prec^1_{p_2}$ so that there will not be any conflicting priorities for any choice of $p_3 \in Q_{p_1}^1 \cap Q_{p_2}^1$.     

\begin{figure}[htpb]
\centering
\begin{tabular}{c@{\hspace{0.02\linewidth}}c@{\hspace{0.02\linewidth}}c@{\hspace{0.02\linewidth}}c}

\includegraphics[width=1.6in]{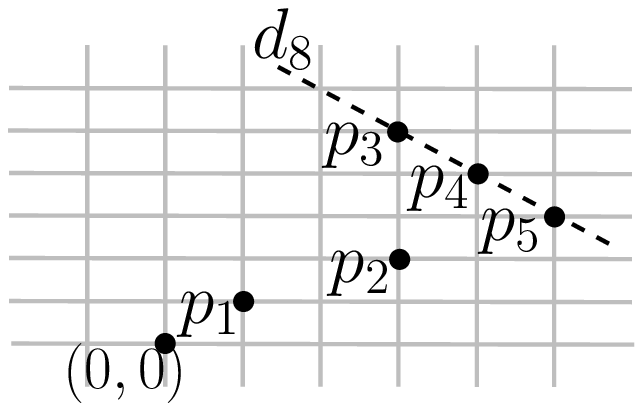} &
\includegraphics[width=1.3in]{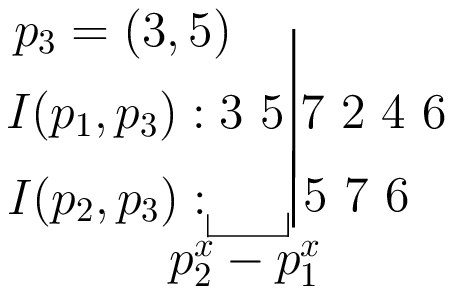} &
\includegraphics[width=1.3in]{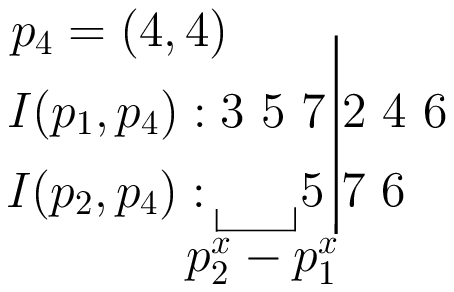} &
\includegraphics[width=1.3in]{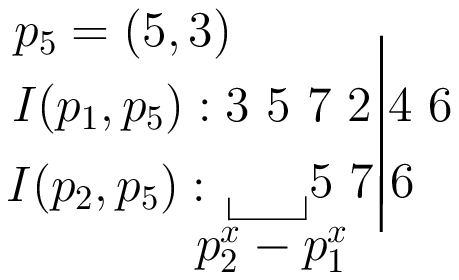} \\
(a) & (b) & (c) & (d)
\end{tabular}
\caption{The layout view of the intervals with $p_1 = (1,1)$ and $p_2 = (3,2)$. (a) The points in the grid. (b) Dividing line for $p_3$. (c) Dividing line for $p_4$. (d) Dividing line for $p_5$.}
\label{fig:layout}
\end{figure}

To help visualize what must happen to avoid these conflicting priorities, we describe a ``layout'' of the integers in the interval. Consider a point $p_3 \in Q^1_{p_1} \cap Q^1_{p_2}$ and the intervals $I(p_1,p_3)$ and $I(p_2,p_3)$ that are used to define the segments $R_{p_1}(p_3)$ and $R_{p_2}(p_3)$ respectively, and without loss of generality assume that $p^x_1 \leq p^x_2$.  We write the intervals in increasing order in a matrix with two rows with $I(p_1,p_3)$ in the top row and $I(p_2,p_3)$ in the bottom row. The first element of $I(p_2,p_3)$ is ``shifted'' to the right $(p^x_2 - p^x_1)$ positions after the first element of $I(p_1,p_3)$. Note that the integers in $I(p_1,p_3)$ and $I(p_2,p_3)$ are determined by the natural total order on the integers, but then are sorted by the total orders $\prec^1_{p_1}$ and $\prec^1_{p_2}$ respectively. The advantage of the layout view is that a single vertical line can break both of the intervals into the horizontal movements portion and vertical movements portion. We call such a line a \textit{dividing line}. The left parts consist of the integers on which the segments make horizontal movements and the right parts consist of the integers on which 
the segments make vertical movements. We define the \textit{antidiagonal} $d_C$ to be the set of all of the points $p=(p^x,p^y)$ in $\mathbb{Z}^2$ such that $(p^x+p^y)=C$. Note that for any two points $q,q'\in Q^1_p \cap d_C$, we have $I(p,q) = I(p,q')$, and if we ``slide'' $q$ up (resp. down) that antidiagonal $d_C$, then the dividing line that corresponds to $q$ moves to the left (resp. to the right).  See Fig.~\ref{fig:layout}. Now, let $a$ and $b$ be two integers in $I(p_1,p_3) \cap I(p_2,p_3)$, and consider these intervals in layout view. Suppose there exists some dividing line $\ell$ such that in $I(p_1,p_3)$ we have $a$ on the left side of $\ell$ and $b$ on the right side of $\ell$, and simultaneously in $I(p_2,p_3)$ we have $b$ on the left side  of $\ell$ and $a$ on the right side of $\ell$. Then we call $\{a,b\}$ a \textit{bad pair}, and we say that $\ell$ \textit{splits} the bad pair. See Fig.~\ref{fig:bad_delete_large} (a). We say total orders $\prec^1_{p_1}$ and $\prec^1_{p_2}$ have a bad pair if there is a $C$ satisfying $C \geq (p^x_1+p^y_1)$ and $C \geq (p^x_2+p^y_2)$ such that the interval $[p^x_1+p^y_1,C]$ sorted by $\prec^1_{p_1}$ and the interval $[p^x_2+p^y_2,C]$ sorted by $\prec^1_{p_2}$ in the layout view have a bad pair. Now we have the following lemma.

\begin{figure}[htpb]
\centering
\begin{tabular}{c@{\hspace{0.01\linewidth}}c@{\hspace{0.01\linewidth}}c}

\includegraphics[width=1.9in]{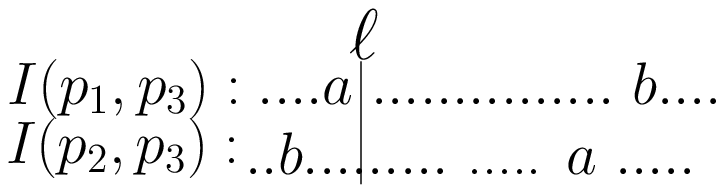} &
\includegraphics[width=1.2in]{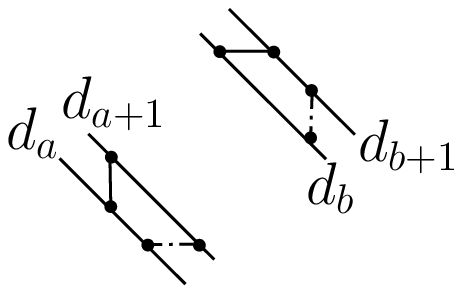}\\
(a) & (b) 
\end{tabular}
\caption{(a) An illustration of a bad pair. (b) An illustration of conflicting priority. }
\label{fig:bad_delete_large}
\end{figure}

\begin{lemma}

If $\prec^1_{p_1}$ and $\prec^1_{p_2}$ have a bad pair, then there exists a $p_3 \in Q^1_{p_1} \cap Q^1_{p_2}$ such that $I(p_1,p_3)$ and $I(p_2,p_3)$ have a bad pair and the dividing line corresponding to $p_3$ splits this bad pair. 
 \label{lem:sup}
\end{lemma}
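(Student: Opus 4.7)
The plan is to construct the required $p_3$ explicitly from the data guaranteed by the hypothesis. By definition of a bad pair of total orders, there exists $C \geq \max(p_1^x + p_1^y,\, p_2^x + p_2^y)$ for which the layout of the top row $[p_1^x+p_1^y,\,C]$ (sorted by $\prec^1_{p_1}$) over the bottom row $[p_2^x+p_2^y,\,C]$ (sorted by $\prec^1_{p_2}$, shifted right by $p_2^x - p_1^x$) contains a bad pair $\{a,b\}$ split by some dividing line at layout position $\ell$. Without loss of generality I would assume $p_1^x \leq p_2^x$, as both the hypothesis and the conclusion are invariant under swapping $p_1$ and $p_2$.

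My candidate is $p_3 := (p_1^x + \ell,\; C+1-p_1^x-\ell)$. Since $p_3^x + p_3^y = C+1$, the intervals $I(p_1, p_3) = [p_1^x+p_1^y,\,C]$ and $I(p_2, p_3) = [p_2^x+p_2^y,\,C]$ reproduce the hypothesized layout exactly. The dividing line of $p_3$ falls after $p_3^x - p_1^x = \ell$ entries of the top and after $p_3^x - p_2^x = \ell - (p_2^x - p_1^x)$ entries of the bottom; after the shift, both positions coincide at layout position $\ell$, so the dividing line corresponding to $p_3$ splits the bad pair $\{a,b\}$.

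To check $p_3 \in Q^1_{p_1} \cap Q^1_{p_2}$, I would write $a_1, b_1$ for the layout positions of $a,b$ in the top and $a_2, b_2$ for their positions in the bottom. The bad-pair conditions give $a_1 < \ell \leq b_1$ and $b_2 < \ell \leq a_2$. Combined with the row bounds (the top occupies positions $[0,\, C - p_1^x - p_1^y]$ and the bottom occupies $[p_2^x - p_1^x,\, C - p_1^x - p_2^y]$), these inequalities force $\ell \geq 1$, $\ell \leq C - p_1^x - p_1^y$, $\ell \geq p_2^x - p_1^x + 1$, and $\ell \leq C - p_1^x - p_2^y$; substituting into the definition of $p_3$ gives $p_3^x > p_1^x$, $p_3^y > p_1^y$, $p_3^x > p_2^x$, and $p_3^y > p_2^y$, as required.

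The main obstacle is really just bookkeeping: correctly identifying the endpoints of each row in the layout and translating the bad-pair inequalities on $a_1, b_1, a_2, b_2$ into coordinate inequalities on $p_3$. Once these are in place, everything else follows immediately from the definitions of the layout and of the dividing line corresponding to a point, and no delicate combinatorial argument or extension of $C$ is needed.
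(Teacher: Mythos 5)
Your proof is correct and follows essentially the same route as the paper's: take the $C$ and splitting dividing line $\ell$ guaranteed by the bad pair, place $p_3$ on the corresponding antidiagonal so that its dividing line is $\ell$, and deduce $p_3 \in Q^1_{p_1} \cap Q^1_{p_2}$ from the fact that each row must have an element on each side of $\ell$. Your version is in fact slightly more careful than the paper's (explicit coordinates for $p_3$, and the antidiagonal sum $C+1$ rather than the paper's $C$, which is the correct bookkeeping given $I(p,q) = [p^x+p^y,\,q^x+q^y-1]$), but the underlying argument is identical.
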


\begin{proof}
 Suppose that $\prec^1_{p_1}$ and $\prec^1_{p_2}$ have a bad pair.  Let $C$ where $C \geq (p^x_1+p^y_1)$ and $C \geq (p^x_2+p^y_2)$ be such  that there is a bad pair in $[p_1^x+p_1^y,C]$ and $[p_2^x + p_2^y, C]$ in layout view. Let $\{a,b\}$ denote the bad pair in the intervals, and let $\ell$ denote a dividing line that splits that bad pair. Let $p_3$ be a point where $p^x_3+p^y_3=C$ and $\ell$ is the dividing line corresponding with $p_3$. We complete the proof by showing that $p_3 \in Q^1_{p_1} \cap Q^1_{p_2}$. 

Because $\{a,b\}$ is a bad pair, we can assume without loss of generality that $a$ is to the left of $\ell$ and $b$ is to the right of $\ell$ in $I(p_1,p_3)$. This implies that there is at least one horizontal movement and at least one vertical movement to get from $p_1$ to $p_3$ (i.e.,  $p_3^x > p_1^x$ and $p_3^y > p_1^y$).  On the other hand, $a$ is to the right of $\ell$ and $b$ is to the left of $\ell$ in $I(p_2,p_3)$. So we similarly have $p^x_3 > p^x_2$ and $p^y_3 > p^y_2$. So, $p_3 \in Q^1_{p_1} \cap Q^1_{p_2}$, completing the proof.
\end{proof}

The following lemma implies that it is necessary that any pair of first quadrant total orders do not have any bad pairs.

 \begin{lemma}
 There is a point $p_3 \in Q^1_{p_1} \cap Q^1_{p2}$ that has a conflicting priority with respect to $\prec^1_{p_1}$ and $\prec^1_{p_2}$ if and only if there is a bad pair in $\prec^1_{p_1}$ and $\prec^1_{p_2}$. 
    \label{lemma:bad_pair}
  \end{lemma}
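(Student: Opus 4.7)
The plan is to unpack both definitions into the layout view and observe that, once translated, ``conflicting priority at $p_3$'' and ``bad pair split by the dividing line of $p_3$'' express exactly the same condition on two integers $a,b$. The bridge between a generic bad pair (defined via some $C$ and the intervals $[p_1^x+p_1^y,C]$ and $[p_2^x+p_2^y,C]$) and a bad pair at a concrete point $p_3 \in Q^1_{p_1} \cap Q^1_{p_2}$ is already provided by Lemma~\ref{lem:sup}, so I would lean on it for one direction and leave the other direction as a direct translation.

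For the forward implication, I would start with a conflicting priority at a point $p_3 \in Q^1_{p_1} \cap Q^1_{p_2}$, witnessed by integers $a,b \in I(p_1,p_3)\cap I(p_2,p_3)$ with $a$ having priority over $b$ in $\prec^1_{p_1}$ and $b$ having priority over $a$ in $\prec^1_{p_2}$. By the definition of priority, in $R_{p_1}(p_3)$ the segment moves vertically at $a$ and horizontally at $b$, and in $R_{p_2}(p_3)$ it moves horizontally at $a$ and vertically at $b$. In the layout view (with $C = p_3^x + p_3^y$), horizontal movements sit left of the dividing line $\ell$ for $p_3$ and vertical movements sit right of $\ell$. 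Thus in $I(p_1,p_3)$ we have $b$ on the left of $\ell$ and $a$ on the right, while in $I(p_2,p_3)$ we have $a$ on the left and $b$ on the right. This is exactly the definition of a bad pair in $\prec^1_{p_1}$ and $\prec^1_{p_2}$.

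For the reverse implication, I would apply Lemma~\ref{lem:sup} to obtain a point $p_3 \in Q^1_{p_1} \cap Q^1_{p_2}$ such that the dividing line $\ell$ corresponding to $p_3$ splits a bad pair $\{a,b\}$ in $I(p_1,p_3)$ and $I(p_2,p_3)$. Then I would simply read off the layout: $a$ left of $\ell$ in the top row means $R_{p_1}(p_3)$ moves horizontally at $a$, while $b$ right of $\ell$ in the top row means it moves vertically at $b$; hence $b$ has priority over $a$ in $\prec^1_{p_1}$. Symmetrically $a$ has priority over $b$ in $\prec^1_{p_2}$. These two priorities conflict, so $p_3$ witnesses a conflicting priority.

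There is no substantive obstacle here, since both directions are definitional once we know the correspondence ``left of $\ell$ $\leftrightarrow$ horizontal move $\leftrightarrow$ smaller-priority side'' and ``right of $\ell$ $\leftrightarrow$ vertical move $\leftrightarrow$ larger-priority side.'' The only thing to be careful about is verifying that the integers $a,b$ picked from a bad pair truly lie in $I(p_1,p_3)\cap I(p_2,p_3)$ and that $p_3 \in Q^1_{p_1} \cap Q^1_{p_2}$, so that the notion of conflicting priority is even well-defined; for the forward direction this is given, and for the reverse direction this is exactly what Lemma~\ref{lem:sup} secures.
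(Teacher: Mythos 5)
Your proposal is correct and follows essentially the same route as the paper: one direction invokes Lemma~\ref{lem:sup} to produce the point $p_3$ whose dividing line splits the bad pair and then reads off the conflicting priority, and the other direction translates a conflicting priority at $p_3$ directly into a bad pair split by $p_3$'s dividing line. Your version is, if anything, slightly cleaner in stating the correspondence between priority, vertical/horizontal movements, and sides of the dividing line explicitly rather than detouring through the induced third-quadrant order $\prec^3_{p_3}$ as the paper does.
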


\begin{proof}
Assume $\prec^1_{p_1}$ and $\prec^1_{p_2}$ have a bad pair. We will show that there is a point $p_3\in Q^1_{p_1} \cap Q^1_{p_2}$ that has a conflicting priority. 

Let $p_3$ be a point as described in Lemma \ref{lem:sup}, and let $\{a,b\}$ denote the bad pair that $p_3$'s dividing line splits. Without loss of generality, $\prec^3_{p_3}$ must give $(a+1)$ priority over $(b+1)$ with respect to $R_{p_1}(p_3)$ and must give $(b+1)$ priority over $(a+1)$ with respect to $R_{p_2}(p_3)$. Therefore we have a conflicting priority. See Fig.~\ref{fig:bad_delete_large} (b).

Now assume that there is a conflicting priority for $p_3$ with respect to $\prec^1_{p_1}$ and $\prec^1_{p_2}$.  We will complete the proof by showing that $I(p_1,p_3)$ and $I(p_2,p_3)$ must have a bad pair.  Let $a$ and $b$ denote the integers in the conflicting priority, and let $\ell$ denote the dividing line with respect to $p_3$ for $I(p_1,p_3)$ and $I(p_2,p_3)$ in layout view. Then by the definition of conflicting priority we must have $a$ to the left of $\ell$ and $b$ to the right of $\ell$ in one interval, and simultaneously we have $b$ to the left of $\ell$ and $a$ to the right of $\ell$ in the other interval, forming a bad pair.  Since $I(p_1,p_3)$ and $I(p_2,p_3)$ have a bad pair, we have that $\prec^1_{p_1}$ and $\prec^1_{p_2}$ have a bad pair. 
\end{proof}

Lemma \ref{lemma:bad_pair} implies that it is necessary for any two points $p_i$ and $p_j$ that $\prec^1_{p_i}$ and $\prec^1_{p_j}$ do not have 
a bad pair. We now show that 
this condition is also sufficient.  

 \begin{lemma}
 If all pairs of total orders have no bad pairs, then the line segments will satisfy properties (S1)-(S5). 
    \label{lemma:sufficient}
  \end{lemma}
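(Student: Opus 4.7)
The plan is to verify each of (S1)--(S5) in turn, with the bulk of the work on (S3). Properties (S1) and (S5) are immediate from the walking construction: each segment is a grid path that moves only in the two orthogonal directions permitted by its quadrant, so endpoints sharing an $x$- or $y$-coordinate admit only moves preserving that coordinate. Property (S2) is built into the definition, since third- and fourth-quadrant segments from $p$ are declared equal to the corresponding first- and second-quadrant segments from the opposite endpoint. Since negative-slope segments arise from the $x$-negation mirror of the nonnegative-slope construction, the ``no bad pair'' hypothesis transfers to the analogous condition on second-quadrant total orders, and the arguments below apply symmetrically in both slope regimes.

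\textbf{(S3).} Suppose for contradiction that two nonnegative-slope segments $R_{p_1}(q_1)$ and $R_{p_2}(q_2)$ have a witness $(t_1, t_2)$. First I reduce to the case $q_1 = q_2 = t_2$: the sub-path from $p_i$ to $t_2$ inside $R_{p_i}(q_i)$ coincides with $R_{p_i}(t_2)$, because restricting the top $q_i^y - p_i^y$ integers of $I(p_i, q_i)$ under $\prec^1_{p_i}$ to the sub-interval $I(p_i, t_2)$ necessarily yields the top $t_2^y - p_i^y$ elements there (the count of vertical moves reaching $t_2$ is pinned down by the walk). So $(t_1, t_2)$ remains a witness for $R_{p_1}(p_3)$ and $R_{p_2}(p_3)$ where $p_3 := t_2$. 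Setting $a := t_1^x + t_1^y$ and $b := t_2^x + t_2^y - 1$, the witness forces one of the segments to move vertically at $a$ and horizontally into $t_2$ at $b$, while the other does the opposite. In the layout view for $\prec^1_{p_1}$ and $\prec^1_{p_2}$ with the dividing line $\ell$ determined by $p_3$, this places $\{a,b\}$ on opposite sides of $\ell$ in opposite rows, exhibiting a bad pair between $\prec^1_{p_1}$ and $\prec^1_{p_2}$ --- contradicting the hypothesis. A mixed-slope pair cannot have a witness at all: two common points $t_1 \ne t_2$ would simultaneously require $t_2 \in Q_{t_1}^1$ (from the nonnegative-slope path) and $t_2 \in Q_{t_1}^2 \cup Q_{t_1}^4$ (from the negative-slope path), forcing a shared coordinate, after which (S5) forces the two segments to coincide between $t_1$ and $t_2$.

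\textbf{(S4) and main obstacle.} For (S4), fix $p$. The first- and second-quadrant rays from $p$ are generated by $\prec_p^1$ and $\prec_p^2$, forming CDRs that inherit (S4) from Christ et al.'s characterization. The third- and fourth-quadrant rays from $p$ are generated implicitly by total orders $\prec_p^3, \prec_p^4$ that must respect the priorities inherited, via (S2), from the first- and second-quadrant segments ending at $p$. These implicit orders exist provided the priority partial order is acyclic: Lemma~\ref{lemma:bad_pair} rules out $2$-cycles, and a longer cycle $a_1 \succ a_2 \succ \cdots \succ a_k \succ a_1$ can always be shortened --- if $a_{i+2}$ lies in Ray~$i$'s interval with a ``V'' status one obtains a $2$-cycle with Ray~$i{+}1$, while ``H'' status yields a direct edge $a_i \succ a_{i+2}$ that shortens the cycle; if $a_{i+2}$ never lies in any Ray~$i$'s interval, the inequalities $a_{i+2} \le p_{R_i}^x + p_{R_i}^y < a_i$ chain around the cycle to $a_1 < a_1$. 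With $\prec_p^3, \prec_p^4$ well-defined, the CDR characterization delivers (S4). The main obstacle is the careful bookkeeping: matching the top-$k$ structure under sub-interval restriction for the (S3) reduction, then translating the witness into the layout-view bad pair precisely, and finally establishing acyclicity of the priority partial order needed for the implicit orders in (S4).
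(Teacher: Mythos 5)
Your proof is correct and follows essentially the same route as the paper: the paper likewise dispatches (S1), (S2), (S4), (S5) as automatic from the construction, handles mixed-slope pairs by the monotonicity argument showing two points differing in both coordinates cannot lie on both segments, and derives (S3) for same-slope pairs by passing to the subsegments ending at $t_2$ and observing that the dividing line for $t_2$ splits the bad pair $\{t_1^x+t_1^y,\ t_2^x+t_2^y-1\}$. Your additional work on (S4) --- justifying the existence of the implicit third- and fourth-quadrant total orders via acyclicity of the priority relation --- goes beyond what the paper writes (it simply asserts (S4) by construction), but it is consistent with, and arguably a more careful account of, the same framework.
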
 

\begin{proof}
It is easy to see that (S1), (S2), (S4) and (S5) are automatically satisfied by construction, and it is only (S3) that we need to prove.  We first will show that a segment with nonnegative slope and a segment with non-positive slope will always satisfy (S3).  To see this, consider a nonnegative line segment $R_{p'}(q')$ and a non-positive line segment $R_p(q)$.  If they violate $(S3)$ then there must be a witness $(t_1,t_2)$ in $R_p(q) \cap R_{p'}(q')$ for two points $t_1=(t^x_1, t^y_1)$ and $t_2=(t^x_2, t^y_2)$ such that $t^x_1 \neq t^x_2$ and $t^y_1 \neq t^y_2$.  But we will show that for any two points $r_1$ and $r_2$ that satisfy $r^x_1 \neq r^x_2$ and $r^y_1 \neq r^y_2$, it cannot be that $r_1$ and $r_2$ are in both segments. Without loss of generality, assume that $r^x_1 < r^x_2$ and $r_1 \in R_{p'}(q') \cap R_p(q)$. 

All points after $r_1$ in $R_{p'}(q')$ have $y$-coordinate at least $r^y_1$, and all points after $r_1$ in $R_p(q)$ have $y$-coordinate at most $r^y_1$.  Therefore if there is a point $z$ that comes after $r_1$ in $R_{p'}(q') \cap R_p(q)$ then it must satisfy $z^y = r_1^y$.  This implies that if both segments contain $r_1$ and $R_{p}(q)$ contains $r_2$ then $R_{p'}(q')$ cannot contain $r_2$. Thus $R_{p}(q)$ and $R_{p'}(q')$ do not have a witness and do not violate $(S3)$.

Now without loss of generality, consider segments $R_{p_1}(q_1)$ and $R_{p_2}(q_2)$ with nonnegative slope.  In order to violate (S3), there must be a witness $(t_1,t_2)$ to the violation of (S3). Suppose we have such a witness, and consider the subsegments $R_{p_1}(t_2)$ and $R_{p_2}(t_2)$.  If we consider the intervals $I(p_1,t_2)$ and $I(p_2,t_2)$ in layout view, we can see that the dividing line corresponding to $t_2$ will split the bad pair $\{t_1^x+t_1^y,t_2^x+t_2^y-1\}$. Therefore if there are no bad pairs, then there cannot be a witness to the violation of (S3).
\end{proof}

Combining Lemma \ref{lemma:bad_pair} and \ref{lemma:sufficient}, we get the following Theorem \ref{thm:main}.  
\begin{theorem}
\label{thm:main}
A system of nonnegative sloped line segments in $\mathbb{Z}^2$ is a CDS if and only if we have a total order for the first quadrants for each point such that each pair of total orders have no bad pairs and the third quadrant segments are induced by the corresponding first quadrant segments.    
\end{theorem}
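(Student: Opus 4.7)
The plan is to derive Theorem \ref{thm:main} directly as a combination of Lemma \ref{lemma:bad_pair} and Lemma \ref{lemma:sufficient}, which have already done the essential work. The theorem is an ``if and only if'' statement, so I will handle the two directions separately, and each one reduces to invoking one of the lemmas.

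For the sufficiency direction, assume that for every $p \in \mathbb{Z}^2$ we have a first-quadrant total order $\prec^1_p$ such that no pair $(\prec^1_{p_i}, \prec^1_{p_j})$ has a bad pair, and that third-quadrant segments $R_p(q)$ (with $q \in Q^3_p$) are defined to equal the corresponding $R_q(p)$ (which puts (S2) in by construction for these pairs). Then Lemma \ref{lemma:sufficient} applies directly and yields that (S1)--(S5) all hold, so the system is a CDS.

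For the necessity direction, assume we have a CDS consisting of nonnegative-sloped segments. Christ et al.'s characterization of CDRs tells us that the first-quadrant segments emanating from any single point $p$ are generated by some total order $\prec^1_p$, so such a family of orders exists. The fact that third-quadrant segments are induced by the corresponding first-quadrant segments is an immediate consequence of (S2) together with the fact that $q \in Q^3_p$ is equivalent to $p \in Q^1_q$. It remains to show that no two of these total orders admit a bad pair. Suppose, for contradiction, that $\prec^1_{p_1}$ and $\prec^1_{p_2}$ do have a bad pair. Then by Lemma \ref{lemma:bad_pair} there exists a point $p_3 \in Q^1_{p_1} \cap Q^1_{p_2}$ that is a conflicting priority with respect to $\prec^1_{p_1}$ and $\prec^1_{p_2}$. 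As discussed in the paragraph preceding Lemma \ref{lem:sup}, a conflicting priority forces any choice of $\prec^3_{p_3}$ to make either $R_{p_3}(p_1) \neq R_{p_1}(p_3)$ or $R_{p_3}(p_2) \neq R_{p_2}(p_3)$, which violates (S2). This contradicts the assumption that we started with a CDS, so no bad pair can exist.

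Because both directions are essentially quotations of the two lemmas, there is no serious obstacle; the only thing to be careful about is bookkeeping. In particular, I need to state explicitly that the existence of the per-point total orders $\prec^1_p$ in the necessity direction comes from the Christ et al.\ CDR characterization applied to the first-quadrant rays at each $p$, and that the induced third-quadrant segments are forced by (S2), so that the hypothesis ``third quadrant segments are induced by the corresponding first quadrant segments'' is automatically met. With those two remarks in place, the theorem follows in a few lines.
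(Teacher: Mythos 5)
Your proposal is correct and takes essentially the same route as the paper, which proves the theorem simply by combining Lemma \ref{lemma:bad_pair} (necessity, via the conflicting-priority argument forcing a violation of (S2)) with Lemma \ref{lemma:sufficient} (sufficiency). Your write-up just makes explicit the bookkeeping the paper leaves implicit, namely that the per-point orders come from the CDR characterization and that the third-quadrant segments are forced by (S2).
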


\section{Luby and Christ et al.~in the Context of Our Characterization}
\label{sec:luby_christ}
In an attempt to tie together some of the previous works on CDSes, we now analyze the work of Luby \cite{Luby88} and Christ et al.~\cite{ChristPS12} in the context of our characterization.  Chun et al.~and Christ et al.~were not aware of Luby's work when publishing \cite{ChristPS12} and \cite{ChunKNT09}, although Christ gives a comparison of his work with that of Luby in his thesis \cite{Christ2011}.  

We will first provide an analysis relating smooth grid geometries given by Luby \cite{Luby88}.  To do so, we need the following definition.  Consider any two points $p_1$ and $p_2$ with first quadrant total orders $\prec^1_{p_1}$ and $\prec^1_{p_2}$.  We say that $\prec^1_{p_1}$ and $\prec^1_{p_2}$ are \textit{in agreement} if $a \prec^1_{p_1} b$ if and only if $a \prec^1_{p_2} b$ for every pair of integers $a$ and $b$ such that antidiagonals $d_a$ and $d_b$ intersect $Q^1_{p_1} \cap Q^1_{p_2}$.  Intuitively, $\prec^1_{p_1}$ and $\prec^1_{p_2}$ are in agreement if they are the same ordering when considering antidiagonals intersecting both first quadrants.  We now prove the following lemma about smooth grid geometries.  An equivalent lemma was proved by Christ \cite{Christ2011} using a different proof technique.  We argue the lemma for segments with nonnegative slope, but an equivalent argument holds for segments with negative slope.  

\begin{lemma}
 A CDS is a smooth grid geometry if and only if $\prec^1_p$ and $\prec^1_q$ are in agreement for any pair of points $p,q \in \mathbb{Z}^2$.
    \label{lemma:smoothness}
  \end{lemma}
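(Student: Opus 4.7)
The plan is to convert the smoothness condition into a property of the first-quadrant total orders via the consecutive differentials of the dist function. The key observation is that, for a nonnegative-sloped segment $R_p(q)$ and antidiagonals $C_1, C_2 \in I(p,q)$, the segment moves horizontally at $C_1$ and vertically at $C_2$ exactly when $C_1$ lies outside and $C_2$ inside the top $q^y - p^y$ elements of $I(p,q)$ under $\prec^1_p$, which in particular forces $C_1 \prec^1_p C_2$. Conversely, whenever $C_1 \prec^1_p C_2$, one can always realize such a pair of movements by choosing $q \in Q^1_p$ with $q^x + q^y = \max(C_1, C_2) + 1$ and $q^y - p^y = |\{c \in I(p,q) : c \succ^1_p C_1\}|$; verifying this $q$ is a valid first-quadrant point is a small counting check. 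This constructive converse will be the workhorse of the reverse direction.

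For the direction in-agreement $\Rightarrow$ smooth, suppose for contradiction that two nonnegative-sloped segments $R_{p_1}(q_1), R_{p_2}(q_2)$ fail to be smooth. Then the dist function has a consecutive difference of $+1$ at some antidiagonal $C_1$ and of $-1$ at some antidiagonal $C_2$, both lying in the common defined domain. Unpacking these differences: at $C_1$, $R_{p_1}(q_1)$ moves horizontally while $R_{p_2}(q_2)$ moves vertically, and at $C_2$ the roles are swapped. By the key observation this forces $C_1 \prec^1_{p_1} C_2$ and $C_2 \prec^1_{p_2} C_1$, a disagreement. Moreover, being in the common domain means $C_1, C_2 \geq \max(p_1^x+p_1^y,\, p_2^x+p_2^y)$, so $d_{C_1}$ and $d_{C_2}$ each intersect $Q^1_{p_1} \cap Q^1_{p_2}$, contradicting in-agreement.

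For the reverse direction (contrapositive), suppose $\prec^1_{p_1}$ and $\prec^1_{p_2}$ are not in agreement, witnessed by integers $a, b$ with $a \prec^1_{p_1} b$ and $b \prec^1_{p_2} a$, and both antidiagonals $\geq \max(p_1^x+p_1^y,\, p_2^x+p_2^y)$. Apply the constructive part of the key observation twice: choose $q_1 \in Q^1_{p_1}$ so that $R_{p_1}(q_1)$ moves horizontally at $a$ and vertically at $b$, and choose $q_2 \in Q^1_{p_2}$ so that $R_{p_2}(q_2)$ moves vertically at $a$ and horizontally at $b$, both with $q_i^x + q_i^y = \max(a,b) + 1$. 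The dist function on these two segments then has consecutive differences $+1$ at $a$ and $-1$ at $b$, both in the common defined domain, so the pair is not smooth and the CDS is not a smooth grid geometry. The negative-slope case is handled symmetrically by applying the same argument to the mirrored points $m(p)$ and the total orders $\prec^2_p$. The main technical obstacle is verifying the constructive converse of the key observation, namely that the chosen $q_i$ lies in $Q^1_{p_i}$ and that $I(p_i, q_i)$ contains both $a$ and $b$ with the dividing line separating them; this reduces to checking that the count $|\{c \in I(p_i,q_i) : c \succ^1_{p_i} a\}|$ never exhausts the interval, which is where the hypothesis that $a, b \geq \max(p_1^x+p_1^y,\, p_2^x+p_2^y)$ is genuinely used.
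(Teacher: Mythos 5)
Your proposal is correct and follows essentially the same route as the paper: in the forward direction you extract two antidiagonals where the consecutive differences of dist have opposite signs (the paper's $d_\omega$ and $d_\tau$) to produce a disagreement in the total orders, and in the reverse direction you construct $q_1,q_2$ whose dividing lines separate $a$ from $b$, which is exactly the paper's construction stated with an explicit formula for $q^y-p^y$. No gaps; the counting check you flag is the same routine verification the paper leaves implicit.
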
  

\begin{proof}
First, we prove that if the total orders of all pairs of points in $\mathbb{Z}^2$ are in agreement, then the induced CDS is smooth by proving the contrapositive. Suppose, a CDS is not smooth. Then there are some line segments $R_{p_1}(q_1)$ and $R_{p_2}(q_2)$ that are not smooth. By the definition of smoothness, we have antidiagonals $d_E, d_F, d_G$ and that intersect both $R_{p_1}(q_1)$ and $R_{p_2}(q_2)$ such that the antidiagonal distance function dist() is not monotonically increasing or decreasing along these antidiagonals. Now, without loss of generality, assume that the distance between them increases from $d_E$ to $d_F$ and decreases from $d_F$ to $d_G$. See Fig.~\ref{fig:smoothness} (a). 

\begin{figure}[htpb]
\centering
\begin{tabular}{c@{\hspace{0.01\linewidth}}c@{\hspace{0.01\linewidth}}c@{\hspace{0.01\linewidth}}c}

\includegraphics[width=1.5in]{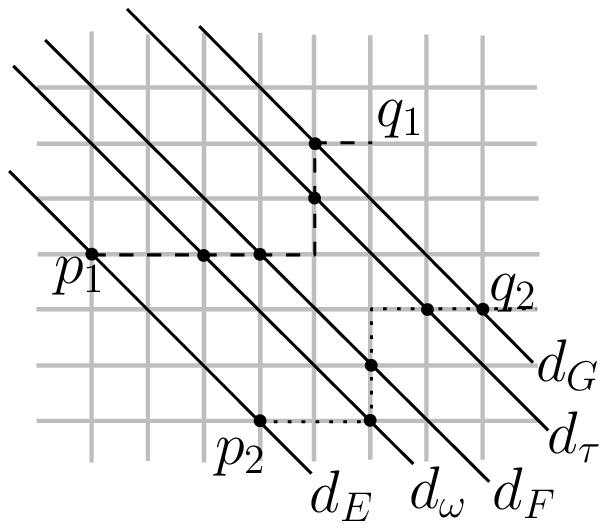} &
\includegraphics[width=1.5in]{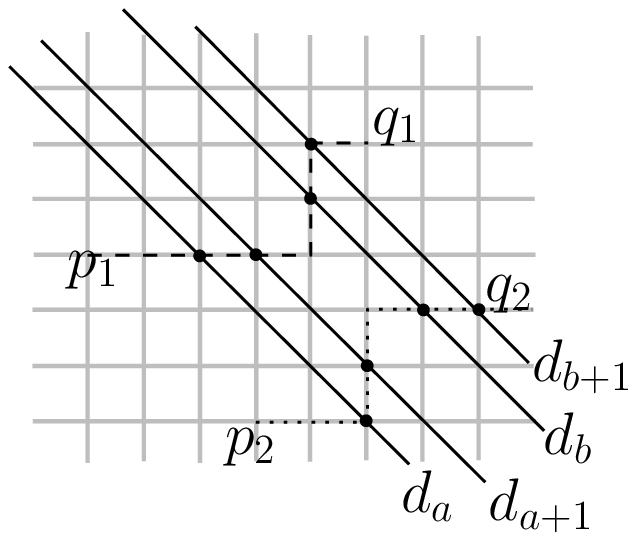}\\

(a) & (b)
\end{tabular}
\caption{An illustration of Lemma \ref{lemma:smoothness}.}
\label{fig:smoothness}
\end{figure}

Now, we claim that the total orders $\prec_{p_1}$ and $\prec_{p_2}$ are not in agreement.  We start from $d_E$ to check each antidiagonal between $d_E$ and $d_F$ to find the first antidiagonal where $R_{p_1}(q_1)$ goes horizontally and $R_{p_2}(q_2)$ goes vertically. Clearly such an antidiagonal exists since the antidiagonal distance at $d_F$ is smaller than at $d_E$, and let $d_\omega$ be such an antidiagonal. Similarly, we start from $d_F$ to find an antidiagonal between $d_F$ and $d_G$ to find the first antidiagonal where $R_{p_1}(q_1)$ goes vertically and $R_{p_2}(q_2)$ goes horizontally. Let $d_\tau$ be such an antidiagonal. As the line segment $R_{p_1}(q_1)$ goes horizontally on $d_\omega$ and vertically on $d_\tau$, we have $\omega \prec_{p_1} \tau$. On the other hand, $R_{p_2}(q_2)$ goes vertically on $d_\omega$ and horizontally on $d_\tau$, so we have  $\tau \prec_{p_2} \omega$. So, the two total orders do not agree on $\omega$ and $\tau$. It follows that if all of the total orders are in agreement, then the CDS is smooth.

Now, we prove that if two line segments do not agree on a pair of elements, then the CDS defined by those segments is not smooth. Suppose, we have two points $p_1$ and $p_2$ and suppose the total orders $\prec_{p_1}$ and $\prec_{p_2}$ assigned to them disagree on some numbers $a$ and $b$. Now, we show that there exists two points $q_1 \in Q^1_{p_1}$ and $q_2 \in Q^1_{p_2}$ so that $R_{p_1}(q_1)$ and $R_{p_2}(q_2)$ are not smooth. We choose a point $q_1$ such that $(q_1^x + q_1^y) > \text{max}(a,b)$ and the dividing line corresponding to $q_1$ is between $a$ and $b$ in layout view for $I(p_1,q_1)$ when sorted by $\prec_{p_1}$.  Without loss of generality suppose we have $a \prec_{p_1} b$, then $R_{p_1}(q_1)$ goes horizontally on $a$ and goes vertically on $b$. We choose $q_2$ similarly, that is the dividing line corresponding to $q_2$ is between $a$ and $b$ in layout view for $I(p_2,q_2)$ when sorted by $\prec_{p_2}$.  Since $\prec_{p_1}$ and $\prec_{p_2}$ disagree on $a$ and $b$, we must have $b \prec_{p_2} a$, and therefore we have $R_{p_2}(q_2)$ goes horizontally on $b$ and goes vertically on $a$. Now, consider the antidiagonal $d_a$ and $d_b$ which intersects $R_{p_1}(q_1)$ and $R_{p_2}(q_2)$. See Fig.~\ref{fig:smoothness} (b). As $R_{p_1}(q_1)$ goes horizontally on $a$ and $R_{p_2}(q_2)$ goes vertically on $a$, their distance on $d_{a+1}$ is decreased and also $R_{p_2}(q_2)$ goes horizontally on $b$ and $R_{p_1}(q_1)$ goes vertically on $b$, their distance on $d_{b+1}$ is increased. So, the distance between the line segments is not monotonically increasing or decreasing.   It follows that the corresponding CDS is not smooth. 
\end{proof}

We now turn our attention to analyzing the work of Christ et al.~\cite{ChristPS12} in the context of bad pairs.  They give two methods for choosing total orders to construct a CDS.  The first method is to assign total orders to points so that all pairs of total orders are in agreement (e.g., assigning the same total order to all points).  Note that if two total orders have a bad pair, then there necessarily has to be two integers $a$ and $b$ such that one total order has $a \prec b$ while another has $b \prec a$.  But this clearly cannot happen if all total orders are in agreement.  Therefore there are no bad pairs and by Theorem \ref{thm:main} it is a CDS.  

They also give an example of a CDS constructed using total orders that are not in agreement.  Specifically, a point's total order is chosen depending on if it is above or below the $x$-axis. Because of the special role of the $x$-axis, this example is called the \textit{waterline example}. In the waterline example, every point $p$ such that $p^y \geq 0$ uses the natural total order, that is $\prec_p^1 = (p^x + p^y) \prec (p^x + p^y +1) \prec \cdots \prec (+\infty)$.  For points $p$ such that $p^y < 0$, the total order is a function of its $x$-coordinate $p^x$.  Specifically, we have $\prec_p^1 = (p^x \prec p^x + 1) \prec \cdots \prec (+\infty) \prec (p^x -1) \prec (p^x - 2) \prec \cdots \prec (-\infty)$.  After giving this definition, Christ et al.~point out that it is easy to see that the segments form a CDS.  We give a formal proof using our characterization.

\begin{lemma}
\label{lemma:waterline}
The waterline example is a CDS.
\end{lemma}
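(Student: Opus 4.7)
The plan is to invoke Theorem \ref{thm:main}: it suffices to show that for every pair $p_1, p_2 \in \mathbb{Z}^2$ the waterline orders $\prec^1_{p_1}$ and $\prec^1_{p_2}$ contain no bad pair (the negative-slope segments are handled by the analogous mirrored statement). Fixing $p_1, p_2$ with $p_1^x \leq p_2^x$, I would split into three cases on the signs of $p_1^y$ and $p_2^y$.

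When both $p_1^y \geq 0$ and $p_2^y \geq 0$, both orders are the natural order on their respective domains, so they agree on every pair in the common domain and a bad pair, which requires disagreement, cannot occur. When $p_1^y < 0$ and $p_2^y \geq 0$, the common domain $[p_2^x + p_2^y, \infty)$ lies entirely in $[p_1^x, \infty)$, so every $a, b$ in the common domain lies in the natural part of both orders, again forcing agreement. So the substantive cases are (I) both $y$-coordinates negative, and (II) $p_1^y \geq 0$ with $p_2^y < 0$.

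In each substantive case $\prec^1_{p_i}$ with $p_i^y < 0$ consists of a natural part on $[p_i^x, \infty)$ followed by a reverse-natural part on $[p_i^x + p_i^y, p_i^x - 1]$. A candidate bad pair $\{a,b\}$ must lie in the common domain and the two orders must disagree on it, so the positions of $a$ and $b$ relative to $p_1^x$ and $p_2^x$ are constrained to a short list of sub-cases (both integers in one reverse part, or one in a natural part and the other in a reverse part, etc.). For each sub-case I would compute the column of $a$ and of $b$ in the top row (sorted by $\prec^1_{p_1}$) and in the bottom row (sorted by $\prec^1_{p_2}$, shifted right by $p_2^x - p_1^x$), and then check whether the overlap inequality $\max(X_1, X_2) < \min(Y_1, Y_2)$ on the four columns, which is required for a single dividing line to split the pair in opposite ways, is actually satisfiable for some valid $C$.

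The main obstacle is the column bookkeeping, but each sub-case collapses cleanly because $p_1^y \geq 0$ and $p_2^y < 0$ force the column expressions into incompatible inequalities. For example, in Case II with $a \geq p_2^x > b$ straddling the waterline of $\prec^1_{p_2}$, one finds $\max(X_1, X_2) = a - p_1^x$ while $Y_1 = a - p_1^x - p_1^y$, so the overlap forces $0 < -p_1^y$, contradicting $p_1^y \geq 0$; with $a, b$ both in the reverse part of $\prec^1_{p_2}$, the overlap reduces to the simultaneous bounds $C + p_1^y < p_2^x$ and $C > p_2^x + p_1^y$, again impossible for $p_1^y \geq 0$; and in Case I the analogous sub-cases reduce to incompatible bounds of the form $C \geq p_2^x + 1$ and $C < p_2^x - 1$ (or $b > C$ together with $b \leq C$). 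Having ruled out every disagreement sub-case, Theorem \ref{thm:main} delivers the claim.
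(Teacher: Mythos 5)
Your proposal is correct and follows essentially the same route as the paper: reduce to showing no pair of first-quadrant total orders has a bad pair (Theorem \ref{thm:main}), then case-analyze on which points lie below the waterline and track positions in the layout view. The only differences are organizational --- you normalize by $x$-coordinate, which lets you dismiss the mixed case with the below-waterline point on the left via a clean containment argument (the common interval lies in the natural part of both orders), and you replace the paper's qualitative partition into sections $S_1,S_2,S_3$ with explicit column arithmetic; both are sound.
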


\begin{proof}
The waterline example clearly satisfies all properties other than (S3).  We show that it also satisfies (S3) by showing that no pair of total orders from the waterline example have a bad pair. Suppose $p_1=(p_1^x,p_1^y)$ and $p_2=(p_2^x,p_2^y)$ are such that $p_1^y \geq 0$ and $p_2^y \geq 0$.  Then they use the same total order and they clearly don't have a bad pair.  If $p_1^y < 0$, $p_2^y < 0$, and $p_1^x = p_2^x$ then they also use the same total order and therefore will not have a bad pair.  So now suppose that $p_1^y < 0$, $p_2^y < 0$, and $p_1^x \neq p_2^x$.  Without loss of generality, assume that $p_1^x < p_2^x$, and moreover let $p_2^x = p_1^x + \tau$ for some integer $\tau > 0$.  Let $q$ be any point in $Q^1_{p_1} \cap Q^1_{p_2}$, and let $C = q^x + q^y$. We will now show that $I(p_1,q)$ and $I(p_2,q)$ do not have a bad pair.

\begin{figure}[htpb]
\centering
\includegraphics[width=4.5in]{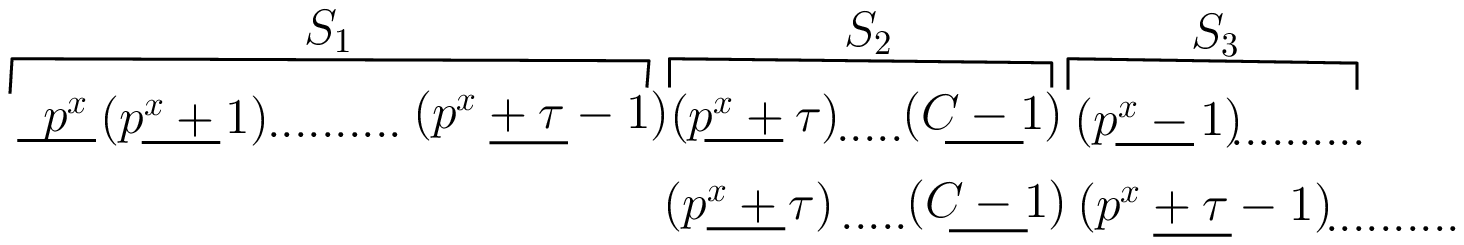}
\caption{An illustration of that the waterline example does not have a bad pair.}
\label{fig:water1}
\end{figure} 

Consider the intervals in layout view.  See Fig.~\ref{fig:water1}.  We partition the layout view into three sections $S_1, S_2$, and $S_3$.  $S_1$ consists of the smallest $\tau$ integers of $I(p_1,q)$ according to $\prec_{p_1}^1$, $S_2$ consists of the layout view containing $x+\tau$ through $C-1$ in both intervals, and $S_3$ consists of the integers to the right of $C-1$ in both intervals.  We also partition the interval $(-\infty, C-1]$ into three sub-intervals $I_1 = (-\infty, x-1]$, $I_2 = [x, x+\tau - 1]$, and $I_3 = [x+\tau, C-1]$ (note that these intervals are with respect to the natural total order and not $\prec_{p_1}^1$ and $\prec_{p_2}^1$).  We now argue that there are no bad pairs $\{a,b\}$ in $I(p_1,q)$ and $I(p_2,q)$.  First let $a$ be any integer in $I_3$, and let $b$ be any integer in $(-\infty, C-1]$.  Note that $a$ is in $S_2$, and it is in the same position in both of the intervals.  Even if $\prec_{p_1}^1$ and $\prec_{p_2}^1$, disagree on the ordering of $a$ and $b$, clearly there cannot be a dividing line that splits them (because $a$ is in the same position).  So we now suppose that $a$ and $b$ are not in $I_3$.  Next suppose $a$ and $b$ are both in $I_1$.  In this case both $a$ and $b$ are in $S_3$ in both intervals, and both total orders will order $a$ and $b$ in the same way and therefore they cannot be a bad pair.  If $a$ and $b$ are both in $I_2$ then the total orders will disagree on their ordering, but $a$ and $b$ will both be in $S_1$ for $I(p_1,q)$ and they will both be in $S_3$ for $I(p_2,q)$.  It clearly follows that they cannot be a bad pair.  Finally suppose that $a \in I_1$ and $b \in I_2$.  We then have that both total orders will order $a$ and $b$ the same way, and therefore they cannot be a bad pair.  It follows that $\prec_{p_1}^1$ and $\prec_{p_2}^1$ do not have a bad pair.

\begin{figure}[htpb]
\centering
\includegraphics[width=4.5in]{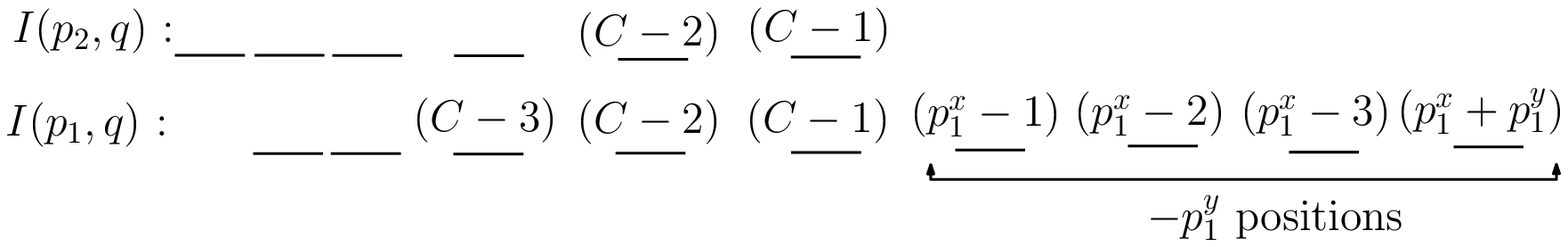}
\caption{An illustration of that the waterline example does not have a bad pair.}
\label{fig:waterline}
\end{figure}  

Now suppose that $p_1^y < 0$ and $p_2^y \geq 0$.  Again let $q$ be any point in $Q^1_{p_1} \cap Q^1_{p_2}$, and let $C = q^x + q^y$.  We complete the proof by showing that $I(p_1,q)$ and $I(p_2,q)$ do not have a bad pair.  Recall that $\prec^1_{p_2}$ is simply the natural total order on the interval $I(p_2,q)$.   Note that the subinterval $[p_1^x,C-1] \subseteq I(p_1,q)$ sorted by $\prec^1_{p_1}$ also results in the natural total order on this subinterval.  Therefore if we let $a$ and 
$b$ be any two integers in $[p_1^x,C-1]$ then we have that $\{a,b\}$ cannot be a bad pair as both total orders will agree on the relative ordering of $a$ and $b$. This implies that if there does exist a bad pair $\{a,b\}$ then at least one of the two integers needs to be in the subinterval $[p_1^x + p_1^y, p_1^x - 1]$, so without loss of generality assume that $a \in [p_1^x + p_1^y, p_1^x - 1]$.  We will prove there are no bad pairs $\{a,b\}$ by considering two different cases: (1) when $b \in [p_1^x + p_1^y, p_1^x - 1]$ and (2) when $b \in [p_1^x,C-1]$.

First suppose that $b \in [p_1^x + p_1^y, p_1^x - 1]$.  Note that because of the $y$-coordinates of $p_1, p_2$, and $q$, it is the case that $R_{p_1}(q)$ must make at least $-p_1^y$ more vertical movements to reach $q$ than $R_{p_2}(q)$.  Therefore when considering $I(p_1,q)$ and $I(p_2,q)$ in layout view, we will have at least $-p_1^y$ integers from $I(p_1,q)$ that are to the right of the largest integers in $I(p_2,q)$.  Also note that in this case, $a$ and $b$ are both amongst the $-p_1^y$ largest elements of $I(p_1,q)$ according to $\prec^1_{p_1}$, and therefore will be positioned to the right of the largest element of $I(p_2,q)$ in layout view.  It immediately follows that $\{a,b\}$ cannot be a bad pair.  See Fig.~\ref{fig:waterline}.

Now suppose that $b \in [p_1^x,C-1]$.  Note that in this case $\prec^1_{p_1}$ has $a$ being larger than $b$, and $\prec^1_{p_2}$ has $b$ as being larger than $a$. Similarly to last time, in the layout view we have the position of $a$ in $I(p_1,q)$ is to the right of the largest element of $I(p_2,q)$, and therefore in this case $\{a,b\}$ is a bad pair if and only if the position of $b$ in the layout view of $I(p_1,q)$ is strictly to the left of the position of $b$ in the layout view of $I(p_2,q)$.  Note that $(C-1)$ is the largest element of $I(p_2,q)$ and is exactly one position to the left of $(p_1^x-1)$ in $I(p_1,q)$, and this implies that the position of $(C-1)$ in $I(p_1,q)$ is either to the right of its position in $I(p_2,q)$ or in the same position.  Since both total orders use the natural total order for all elements positioned to the left of $(q^x + q^y-1)$, it follows that the position of $b$ in $I(p_1,q)$ is either to the right of its position in $I(p_2,q)$ or in the same position.  It then follows that $\{a,b\}$ is not a bad pair.  See Fig.~\ref{fig:waterline}.
\end{proof}

\begin{figure}[htpb]
\centering
\begin{tabular}{c@{\hspace{0.01\linewidth}}c@{\hspace{0.01\linewidth}}c}

\includegraphics[width=1.4in]{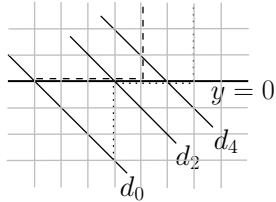}\\

\end{tabular}
\caption{An illustration of that the waterline example is not smooth.}
\label{fig:bad_delete_large1}
\end{figure}

Note that the waterline example is a CDS that is not smooth. See Fig.~\ref{fig:bad_delete_large1}. The dashed segment is $R_p(q)$ with $p = (0,0)$ and $q = (4,3)$ and the dotted segment is $R_{p'}(q')$ with $p' = (3,-3)$ and $q' = (6,3)$.  If we let $z = (3,0)$, then we have that dist($R_p(q),R_{p'}(q'),p) = -3$, dist($R_p(q),R_{p'}(q'),z) = 0$, and dist($R_p(q),R_{p'}(q'),q) = -2$.  By definition, these segments are not smooth and therefore the waterline example is not a smooth grid geometry. 

\section{Acknowledgement}
We would like to thank Dr.~Xiaodong Wu for introducing the problem to us, and Dr.~Wu and Dr.~Kasturi Varadarajan for some valuable discussions.\\

\appendix

%%%%%%%%%%%%%%%%%%%%%%%%%%%%%%%%%%%%

\end{document}